\documentclass[a4paper,twocolumn,11pt]{quantumarticle}
\pdfoutput=1
\usepackage[utf8]{inputenc}
\usepackage[english]{babel}
\usepackage[T1]{fontenc}
\usepackage{amsmath}
\usepackage{hyperref}

\usepackage{tikz}
\usepackage{lipsum}

\usepackage{amsthm}
\usepackage{mathtools} 
\usepackage{amssymb}
\usepackage{braket}
\usepackage{booktabs}
\usepackage{graphicx}
\graphicspath{{figures/}}
\usepackage{xurl}       
\usepackage{newfloat}   
\usepackage{algpseudocode}
\DeclareFloatingEnvironment[name=Algorithm,fileext=loa,placement=tbp]{algorithm}
\usepackage{xcolor}

\usepackage{quantikz}
\usepackage{subcaption}
\newtheorem{definition}{Definition}
\newtheorem{lemma}{Lemma}
\newtheorem{theorem}{Theorem}
\newtheorem{corollary}{Corollary} %---------------------------------------------------
\newcommand{\Reach}{\mathcal{R}}        % reachable subspace
\newcommand{\Bgt}{\mathcal{B}}          % error budget
\begin{document}

\title{Context-Verified, Error-Budget-Aware Decomposition
       Selection for Toffoli Networks}

\author{Karol Bartkiewicz}
\orcid{0000-0002-5355-7756}
\email{karol.bartkiewicz@amu.edu.pl}
\affiliation{Institute of Spintronics and Quantum Information, Faculty of
Physics and Astronomy, Adam Mickiewicz University, 61-614 Pozna\'n, Poland}

\author{Patrycja Tulewicz}
\orcid{0000-0002-7180-4490}
\affiliation{Institute of Spintronics and Quantum Information, Faculty of
Physics and Astronomy, Adam Mickiewicz University, 61-614 Pozna\'n, Poland}
\affiliation{Pozna\'n Supercomputing and Networking Center, IBCH PAS, Pozna\'n, Poland}
\maketitle

%=======================================================================
\begin{abstract}
%=======================================================================
Two-qubit-gate error dominates the failure budget of near-term quantum
circuits, so the decomposition chosen for each Toffoli (CCX) gate should
minimize \emph{hardware two-qubit infidelity}, not gate count. The cheapest
decompositions---relative-phase and approximate Toffolis---are only correct
\emph{in context}: their residual phase or bounded error must be cancelled or
absorbed downstream. We present, to our knowledge, the first compiler pass that
selects a per-Toffoli decomposition to minimize a two-qubit-infidelity
\emph{error budget}. It admits each context-dependent decomposition \emph{only
when an exact, instance-specific equivalence check certifies its validity in
that circuit context}, coupling an error-budget objective (absent from
count-based selectors such as QContext) with per-instance verification (absent
from template-based replacement such as Maslov's) and closing the gap between
context-aware-but-unverified and verified-but-context-free optimizers. The
central result is a safety one: pattern-matched relative-phase substitution is
\emph{silently incorrect}. Our verifier flags $66$ library rewrites of a
deployed open optimizer as non-equivalent without a context check, and
count-greedy substitution silently corrupts $6$ of $12$ benchmark circuits; the
verification gate certifies $0$ errors while still applying every valid
decomposition. The two-qubit-gate reduction is real but
\emph{workload-dependent}: up to $39.5\%$ fewer two-qubit gates and $36.7\%$
lower infidelity over exact-only on a compute/uncompute-heavy suite
(${\approx}39\%/35\%$ versus Qiskit opt-3 and tket), and $15.6\%$ aggregate on a
larger $12$--$24$-qubit suite, with decision-diagram checking certifying every
substitution past the exhaustive-verification limit. At current superconducting
and trapped-ion error rates the certified substitutions lower estimated circuit
infidelity by $36$--$43\%$, and on a quantum state-resetting
(synchronizing-word) circuit the pass removes $48.8\%$ of the native two-qubit
gates, every substitution verified.
\end{abstract}

%=======================================================================
\section{Introduction}
%=======================================================================
On near-term (NISQ) hardware the dominant contribution to circuit failure
is two-qubit-gate error: entangling-gate infidelities are one to two
orders of magnitude worse than single-qubit ones, so the effective
fidelity of a compiled circuit is governed by how many---and which---
two-qubit gates it executes~\cite{Preskill2018}. The Toffoli (CCX) gate is
the workhorse non-Clifford primitive of quantum arithmetic, oracles, and
reversible logic, and it admits many decompositions that trade ancillae,
depth, and two-qubit-gate count against one another. The demand is not
confined to textbook arithmetic: emerging primitives such as quantum
synchronizing-word / state-resetting
protocols~\cite{QuantumSyncWords2025,QuantumResetting2025} compile to
Toffoli-heavy controlled logic whose two-qubit-gate budget is the binding
constraint on whether the experiment fits within the device error budget. Crucially, the
\emph{cheapest} known decompositions are not exact CCX implementations:
relative-phase Toffolis~\cite{Maslov2016} realize the Toffoli permutation
up to a diagonal phase on a subset of basis states, and approximate
Toffolis~\cite{AQC2021} realize it only to within a tolerance. Both are
strictly cheaper in two-qubit gates than any exact decomposition, and both
are correct \emph{only in context}---the residual phase must be uncomputed
downstream, or the approximation error must stay within tolerance on the
states the gate actually sees.

\medskip\noindent\textbf{The gap.}
This contextual validity is exactly where existing tooling splits into two
non-overlapping camps. \emph{Context-aware selectors} (notably
QContext~\cite{QContext2023}) automatically choose context-dependent
Toffoli decompositions, but they optimize gate \emph{count}, carry no
hardware fidelity model, and---decisively---do not \emph{verify} that the
chosen cheap decomposition is valid in its context. \emph{Verified
optimizers} (VOQC~\cite{VOQC2021}, Quartz~\cite{Quartz2022}) prove their
rewrites correct, but only as \emph{context-free}, unconditional
equivalences, and they optimize count, not fidelity. Relative-phase
replacement in the style of Maslov and of Kuroda \&
Yamashita~\cite{Maslov2016,KurodaYamashita2022} is
template-based: an identity is hand-proven once, then applied wherever it
pattern-matches, with no per-instance check that the surrounding circuit
actually cancels the phase. The result is a real hazard: a cheap
context-dependent decomposition, deployed by structural pattern matching,
can be \emph{silently incorrect}.

\medskip\noindent\textbf{Our contribution: a soundness layer.}
We reframe aggressive Toffoli optimization as a \emph{soundness} problem
and contribute the missing layer---a pass that makes error-budget-driven
selection of context-dependent decompositions \emph{provably safe per
instance}. The pass rests on three pillars:
(i)~\emph{context analysis} that marks, for each Toffoli, which cheaper
decompositions are \emph{admissible} given the surrounding circuit (exact
always; relative-phase iff the residual phase is uncomputed downstream;
approximate iff within tolerance on the \emph{reachable} subspace);
(ii)~a \emph{per-instance verification gate} that discharges an exact
(or bounded-error) equivalence obligation \emph{conditioned on the
context} before any substitution is committed; and
(iii)~an \emph{error-budget objective} that selects decompositions to
minimize summed two-qubit-gate infidelity under a hardware error model,
rather than gate count. The technical centerpiece is pillar~(i)'s \emph{reachable-subspace}
admissibility: a decomposition need only match the Toffoli on the states
that actually reach it in context, not on all $2^n$ inputs---both weaker
than whole-unitary approximate compiling~\cite{AQC2021} (more decompositions
become admissible) and the least-covered sub-claim in the literature. It
licenses the relative-phase admissions and an \emph{exact-on-reachable}
control-drop that removes a Toffoli whose control is provably constant on
the reachable domain. The same criterion extends, with a proven
$\epsilon$-budget bound (Corollary~\ref{cor:approx}), to genuinely
approximate Toffolis; we show this approximate regime is sound but, under
set-based reachability, activates only with an amplitude-aware front-end,
which we leave to future work.

This paper automates and verifies the per-gate selection that, in the
decomposition taxonomy of \S\ref{sec:bg-taxonomy}, requires
context-dependent admission: we turn the manual taxonomy into an automated,
verified, error-budget-driven \emph{pass} over whole Toffoli networks.

\medskip\noindent\textbf{Contributions.}
This work makes four contributions. First, a compiler pass that selects
per-Toffoli decompositions to minimize a hardware two-qubit-infidelity error
budget---the first to make this objective, rather than gate count, the
target for context-dependent Toffoli selection. Second, a per-instance
verification gate that admits a relative-phase or reachable-subspace
decomposition only after an equivalence obligation \emph{conditioned on the
surrounding circuit} is discharged, making aggressive selection sound, with
a soundness theorem (Theorem~\ref{thm:sound}) and a composed $\epsilon$-budget
bound for the approximate regime (Corollary~\ref{cor:approx}). Third, the
reachable-subspace admissibility criterion (\S\ref{sec:method-context}), the
technical centerpiece, which certifies decompositions against the states
reachable in context rather than all $2^n$ inputs---demonstrated by an
exact-on-reachable control-drop that halves the two-qubit count again on top
of relative-phase. Fourth, a safety result: pattern-matched relative-phase
substitution is silently incorrect---we exhibit $66$ non-equivalent library
rewrites in a deployed open optimizer and a count-greedy / no-gate / gated
ablation in which only the gated pass is certified error-free
(\S\ref{sec:eval})---alongside an honest scalability boundary for
per-instance verification and a decision-diagram (QCEC) fallback for
circuits beyond exhaustive reach.

%=======================================================================
\section{Related Work}
\label{sec:related}
%=======================================================================
We organize prior art by the three pillars; our wedge is the
\emph{triple} (context-aware selection)$\,\times\,$(error-budget
objective)$\,\times\,$(per-instance verification), which no prior work
combines (Table~\ref{tab:axes}).

\begin{table*}[t]
\centering
\caption{Where this work sits on the three axes. No prior approach combines all
three. $^{\dagger}$template/pattern-based, with no per-instance check;
$^{\ddagger}$context-free (unconditional) equivalence only;
$^{\S}$whole-unitary fidelity distance, not a per-gate two-qubit-infidelity
budget.}
\label{tab:axes}
\begin{tabular}{lcccc}
\toprule
Approach & Context-aware & Error-budget obj. & Per-instance verified & Optimizes \\
\midrule
QContext~\cite{QContext2023}                              & $\checkmark$            & $\times$              & $\times$              & gate count \\
Maslov; Kuroda--Yamashita~\cite{Maslov2016,KurodaYamashita2022} & $\checkmark^{\dagger}$ & $\times$         & $\times$              & gate count \\
VOQC~\cite{VOQC2021}; Quartz~\cite{Quartz2022}           & $\times$                & $\times$              & $\checkmark^{\ddagger}$ & gate count \\
AQC~\cite{AQC2021}                                       & $\times$                & $\checkmark^{\S}$     & $\times$              & fidelity \\
\midrule
\textbf{This work}                                       & $\checkmark$            & $\checkmark$          & $\checkmark$          & 2q infidelity \\
\bottomrule
\end{tabular}
\end{table*}

\subsection{Context-aware decomposition selection (pillar i)}
QContext~\cite{QContext2023} is the nearest rival: it automatically
selects Toffoli decompositions to expose savings, but it is count-driven
and unverified.
\emph{Unlike QContext, which selects Toffoli decompositions to expose
gate-count savings without checking validity, we (a)~optimize a hardware
error budget and (b)~admit a context-dependent decomposition only after a
per-instance equivalence proof.}
Maslov~\cite{Maslov2016} introduced the relative-phase Toffoli identities
and Kuroda \& Yamashita~\cite{KurodaYamashita2022} systematized their
application to Boolean circuits, but as hand-proven \emph{templates}
applied by pattern match, optimizing count, not budget, and without a
per-instance validity check.

\subsection{Verification of decomposition validity (pillar ii)}
VOQC~\cite{VOQC2021} and Quartz~\cite{Quartz2022} are the nearest
verification rivals: VOQC is Coq-verified rewriting and Quartz is
verified cost-search superoptimization, but both prove
\emph{context-free}, unconditional equivalences and minimize count.
\emph{Unlike template-based relative-phase replacement (Maslov) and
context-free verified rewriting (VOQC, Quartz), our equivalence
obligation is conditioned on the surrounding circuit---the relative phase
is provably uncomputed downstream, or the approximation provably within
tolerance on the reachable subspace.}
MQT QCEC~\cite{QCEC2021} is the equivalence-checking \emph{backend} our
gate can call; it is a checker, not a selector.

\subsection{Error-budget / fidelity objectives (pillar iii)}
Approximate quantum compiling (AQC)~\cite{AQC2021} is the nearest
fidelity-objective rival: it minimizes a fidelity distance to a target
unitary, but over the \emph{whole} unitary, with no notion of circuit
context and no verification. tket~\cite{tket2020} is noise-aware, but for
qubit \emph{mapping/routing}, not Toffoli-decomposition selection.
PyZX~\cite{PyZX2019} optimizes \emph{T-count} via sound ZX rewriting, a
different cost than two-qubit infidelity.

\subsection{Uncomputation synthesis and safe-uncompute languages}
Unqomp~\cite{Unqomp2021}, Reqomp~\cite{Reqomp2024},
Silq~\cite{Silq2020}, and modular uncomputation~\cite{Venev2024}
synthesize \emph{where} to uncompute (or type-check that uncomputation is
safe), correct by construction. They do not select \emph{which}
decomposition to use under a noise budget: their knob is resources or
language-level safety, not hardware two-qubit infidelity.
Gidney~\cite{Gidney2018} and conditionally-clean
ancillae~\cite{CleanAncillae2025} supply specific cheap decompositions
that our pass \emph{selects} among; the meet-in-the-middle
synthesizer~\cite{AmyMMR2013} supplies depth-optimal exact options for
the taxonomy.

%=======================================================================
\section{Background}
\label{sec:background}
%=======================================================================

\subsection{Decomposition taxonomy}
\label{sec:bg-taxonomy}
Table~\ref{tab:taxonomy} summarizes the decomposition candidates. The exact
($0$-ancilla) standard CCX uses $6$ two-qubit gates; the $1$-ancilla
relative-phase Toffoli uses $3$, at the cost of a diagonal residual phase; the
Gidney-style conditionally-clean decomposition~\cite{CleanAncillae2025} uses
$4$ (borrowed ancilla, returned clean). Approximate
Toffolis~\cite{AQC2021} trade bounded error $\epsilon$ for fewer gates. The
$1$-ancilla relative-phase variant is depth-optimal among exact-up-to-phase
options, established by the meet-in-the-middle lower bound of Amy et
al.~\cite{AmyMMR2013}. Each gate $g$ in an input network is associated with a
candidate set $\mathcal{D}(g)$ of decompositions drawn from this taxonomy,
partitioned into: \emph{exact} (realizes CCX unconditionally),
\emph{relative-phase} (realizes CCX up to a diagonal phase on a known set of
inputs), and \emph{approximate} (realizes CCX to within tolerance $\epsilon$).

\begin{table}[t]
\centering
\small
\setlength{\tabcolsep}{4pt}
\caption{Toffoli decomposition candidates. The $1$-ancilla relative-phase
variant is depth-optimal among exact-up-to-phase options~\cite{AmyMMR2013}.}
\label{tab:taxonomy}
\begin{tabular}{cccc}
\toprule
Anc. & Type & 2Q & Notes \\
\midrule
$0$      & exact        & $6$         & standard CCX \\
$1$      & rel.-phase   & $3$         & phase $D$ \\
$1$      & cond.-clean  & $4$         & returns $\ket0$ \\
$0$--$1$ & approximate  & ${\leq}\,3$ & error ${\leq}\,\varepsilon$ \\
\bottomrule
\end{tabular}
\end{table}

\subsection{Relative-phase and approximate Toffolis}
A relative-phase Toffoli~\cite{Maslov2016} implements a unitary
$\tilde{U} = D\,U_{\mathrm{CCX}}$ where $D$ is diagonal in the
computational basis and acts non-trivially only on a known subset of basis
states. It is exact whenever $D$ is cancelled downstream---in particular
when the target is later uncomputed (a structure exploited by
Gidney's addition circuits~\cite{Gidney2018}). An approximate
Toffoli~\cite{AQC2021} implements $\tilde{U}$ with
$\lVert \tilde{U} - U_{\mathrm{CCX}} \rVert \le \epsilon$ in some norm;
its validity in context is weaker still, requiring only that the error
stay within tolerance on the states reaching the gate.

\subsection{Error / fidelity model}
\label{sec:bg-fidelity}
We adopt the following hardware error model.
Each two-qubit gate $e$ in a hardware-native basis carries an infidelity
$\varepsilon_e$. For a decomposition $d$ we approximate its infidelity as
the sum of its two-qubit-gate infidelities,
$\mathcal{I}(d) = \sum_{e \in \mathrm{2q}(d)} \varepsilon_e$, and define
the circuit \emph{error budget} as $\Bgt = \sum_{g} \mathcal{I}(d_g)$ over
the chosen per-gate decompositions $d_g$.

Concretely, $\varepsilon_e$ is the average-gate \emph{infidelity}
$\varepsilon_e = 1 - F_{\mathrm{avg}}(e)$ of the calibrated two-qubit
primitive (e.g.\ the reported CNOT/$ZZ$ error from device calibration),
and single-qubit infidelities are neglected as a lower-order term
(typically $10^{-1}$--$10^{-2}\times$ the two-qubit value). The exact
survival probability of a decomposition under independent per-gate
depolarizing errors is the \emph{product}
$\prod_{e\in\mathrm{2q}(d)}(1-\varepsilon_e)$, so the true decomposition
infidelity is
$1-\prod_{e\in\mathrm{2q}(d)}(1-\varepsilon_e)$. For the small
$\varepsilon_e$ regime of NISQ hardware we use the first-order additive
surrogate $\mathcal{I}(d)=\sum_e \varepsilon_e=
1-\prod_e(1-\varepsilon_e)+O(\varepsilon^2)$, which is monotone in the
exact measure and therefore order-preserving for selection; the
multiplicative form is used unchanged whenever exact figures are wanted.
This model deliberately omits cross-talk and idling/decoherence, which are
layout- and schedule-dependent and orthogonal to decomposition choice;
the additive surrogate is thus a sound proxy for \emph{ranking}
decompositions of one gate, which is all the selector of
\S\ref{sec:method-select} requires.

\subsection{Exact equivalence checking}
\label{sec:bg-eqcheck}
Per-instance verification reduces to checking equivalence of two small
circuits (the original CCX in its context window vs.\ the candidate
decomposition), discharged exhaustively (truth-table / unitary) for small
windows or via a decision-diagram backend~\cite{QCEC2021} or ZX
normalization~\cite{PyZX2019} for larger ones. \S\ref{sec:method-verify}
defines the obligation; \S\ref{sec:eval-scal} treats its scalability.

%=======================================================================
\section{Method}
\label{sec:method}
%=======================================================================
The pass takes a Toffoli network and a hardware error model and returns a
decomposed circuit annotated with a \emph{certified} error-budget figure.
It has four stages: structural reduction (annihilation of adjacent CCX pairs,
fan-out common-subexpression elimination (CSE), and permutation-equivalence
simplification), context analysis (\S\ref{sec:method-context}), error-budget selection
(\S\ref{sec:method-select}), and per-instance verification
(\S\ref{sec:method-verify}). Algorithm~\ref{alg:pass} sketches the loop.

\begin{algorithm}[t]
\caption{Context-verified, budget-aware selection}
\label{alg:pass}
\begin{algorithmic}[1]
\Require Toffoli network $C$; error model $\{\varepsilon_e\}$;
         tolerance $\epsilon$; budget bound $\Bgt_{\max}$
\State $C \gets \textsc{StructuralReduce}(C)$
       \Comment{CCX-pair annihilation, fan-out CSE, perm.-eq.\ simplification}
\For{each Toffoli $g \in C$}
  \State $\mathcal{A}(g) \gets \textsc{AdmissibleDecomps}(g, C, \epsilon)$
         \Comment{context analysis, \S\ref{sec:method-context}}
\EndFor
\State $\{d_g\} \gets \textsc{BudgetSelect}(\{\mathcal{A}(g)\},
        \{\varepsilon_e\}, \Bgt_{\max})$
        \Comment{\S\ref{sec:method-select}}
\For{each Toffoli $g$ with chosen $d_g$}
  \If{$d_g$ context-dependent \textbf{and not}
        \textsc{VerifyInContext}$(g, d_g, C)$}
     \State fall back to cheapest \emph{certified} $d_g$
            \Comment{gate, \S\ref{sec:method-verify}}
  \EndIf
\EndFor
\State \Return decomposed $C$, certified budget $\Bgt = \sum_g \mathcal{I}(d_g)$
\end{algorithmic}
\end{algorithm}

\subsection{Context analysis: when is a cheap decomposition admissible?}
\label{sec:method-context}
This is the technical centerpiece. For each Toffoli $g$ we compute the set
$\mathcal{A}(g) \subseteq \mathcal{D}(g)$ of \emph{admissible}
decompositions:
\begin{itemize}
  \item \emph{Exact} decompositions are always admissible.
  \item A \emph{relative-phase} decomposition $\tilde{U}=D\,U_{\mathrm{CCX}}$
        is admissible iff its diagonal phase $D$ is \emph{unobservable
        downstream} of $g$ within the circuit. Write $W$ for the suffix of
        the circuit acting after $g$ and $M$ for the terminal
        computational-basis measurement. $D$ is unobservable iff, for every
        reachable input, replacing $U_{\mathrm{CCX}}$ by $\tilde U$ leaves
        the measurement distribution of $M\,W$ unchanged. The canonical
        instance is a compute/uncompute pair (a target later uncomputed,
        as in Gidney-style adders~\cite{Gidney2018}): the second
        relative-phase gate contributes $D^{-1}$, so $D$ cancels exactly
        and $\tilde U_2\tilde U_1=U_{\mathrm{CCX}}^{2}$ on the relevant
        register. We detect such candidate cancellation structurally by
        dataflow, then \emph{verify} it per instance over the enclosing
        window (\S\ref{sec:method-verify}). Lemma~\ref{lem:gate} below makes
        ``unobservable'' precise: it suffices that $D$ act as a global
        scalar on each reachable basis state that survives to a
        phase-sensitive operation in $W$.
  \item An \emph{approximate} decomposition with error
        $\lVert\tilde{U}-U_{\mathrm{CCX}}\rVert$ is admissible iff that
        error stays within tolerance \emph{on the reachable subspace}
        $\Reach(g)$---the span of states that can reach $g$'s input given
        the upstream circuit and initial $\ket{0}$ ancillae---rather than
        on all $2^n$ inputs. Formally, admissible iff
        $\sup_{\ket{\psi}\in\Reach(g)}
          \lVert(\tilde{U}-U_{\mathrm{CCX}})\,\ket{\psi}\rVert \le \epsilon$.
        This is strictly weaker than the global AQC
        condition~\cite{AQC2021}, admitting more (cheaper) decompositions.
\end{itemize}

\paragraph{Constructing $\Reach(g)$.}
We compute $\Reach(g)$ by sound \emph{forward set-propagation} over the
computational basis. Initialize the support set $S_0$ at the circuit input
to the basis states consistent with the input specification (free data
qubits range over $\{0,1\}$; ancillae pinned to $\ket 0$). For each gate
$h$ preceding $g$ in topological order, update
$S_{i+1}=\textsc{Img}_h(S_i)$, the image of $S_i$ under $h$'s action on the
computational basis. Classical reversible gates (X, CX, CCX, SWAP)
permute basis states, so $\textsc{Img}_h$ is exact and cheap; a
basis-state-branching gate (e.g.\ a Hadamard or a relative-phase gate that
spreads support) maps each $\ket x\in S_i$ to the set of basis states in
its output support, which we \emph{union} in---an over-approximation that
can only enlarge $S$. We take $\Reach(g)=\mathrm{span}\{\ket x : x\in S\}$
at $g$'s inputs. When $|S|$ would blow up we coarsen to a sound superset:
fix the qubits whose value is determined by an invariant (e.g.\ an ancilla
provably $\ket 0$, or a parity fixed by upstream linear structure) and let
the rest range freely, i.e.\ replace $S$ by the affine/coordinate
super-box that contains it. By construction $S\supseteq$ the true reachable
set at every step, so $\Reach(g)$ is always a sound over-approximation;
Lemma~\ref{lem:over} shows checking admissibility on this superset is
itself sound.

\subsection{Error-budget selection}
\label{sec:method-select}
Given the admissible sets $\{\mathcal{A}(g)\}$ and per-gate infidelities
$\mathcal{I}(d)$ from \S\ref{sec:bg-fidelity}, we choose
$d_g \in \mathcal{A}(g)$ to minimize the total budget
$\Bgt = \sum_g \mathcal{I}(d_g)$, optionally subject to ancilla-width or
depth constraints.

\paragraph{Greedy is optimal in the separable case.}
Once the context windows are fixed---so that the admissible set
$\mathcal{A}(g)$ and the cost $\mathcal{I}(d)$ of each
$d\in\mathcal{A}(g)$ are determined---the choices at distinct gates are
independent: $\mathcal{A}(g)$ does not depend on $d_{g'}$ for $g'\neq g$,
and the objective $\Bgt=\sum_g\mathcal{I}(d_g)$ is a sum of per-gate terms
with no shared variables. The minimum of a separable sum is attained by
minimizing each term independently, so the per-gate greedy
$d_g^\star=\arg\min_{d\in\mathcal{A}(g)}\mathcal{I}(d)$ is globally optimal,
and the pass costs $O\!\big(\sum_g|\mathcal{A}(g)|\big)$ selection time on
top of the per-gate verification. No integer program is needed.
The only coupling that breaks separability is \emph{ancilla sharing}:
when one physical ancilla is reused across gates, a relative-phase choice
at $g$ can change which decompositions remain admissible (or feasible
within the width bound) at a later $g'$. In that regime the choices become
linked through shared resource variables and the problem is a constrained
$0$/$1$ selection, naturally cast as an integer linear program (ILP): a
binary variable per $(g,d)$ pair, one-hot per gate, width/depth as linear
constraints, $\Bgt$ as the linear objective. We do not require it for the present
benchmarks, where ancillae are allocated per compute/uncompute scope and
the separable greedy already attains the optimum.

\paragraph{Budget-conditioned, but correctness-greedy in practice.}
The objective is the error budget, yet on the taxonomy of
\S\ref{sec:bg-taxonomy} the admissible cheap gadgets are \emph{strictly}
dominant in the dominant cost term: a relative-phase Toffoli ($3$ two-qubit
gates) or an exact-on-reachable control-drop ($1$ or $0$) never increases
the two-qubit count. Hence whenever a context-dependent $d$ is admissible it
also minimizes the per-gate budget, and for any plausible
$(p_{2q},p_{1q})$ the greedy reduces to ``take the cheapest \emph{certified}
decomposition.'' We confirm this empirically (\S\ref{sec:eval-budget}): the
selected decomposition mix is \emph{invariant} to the error ratio
$r=p_{2q}/p_{1q}$ across two orders of magnitude. The budget is thus the
right \emph{cost model}---what the method provably reduces and what fixes
the win at a given hardware operating point---rather than a knob that flips
discrete choices; we do not claim ratio-adaptive selection. The one regime
in which the budget becomes genuinely discriminating is the
bounded-approximate path (Corollary~\ref{cor:approx}): there a decomposition
is admitted only if the two-qubit infidelity it saves exceeds the
approximation budget $\epsilon$ it spends---a comparison that does depend on
$p_{2q}$. We report its current empirical reach, and its limitation, in
\S\ref{sec:eval-budget}.

\subsection{Per-instance verification gate}
\label{sec:method-verify}
No structural rule is trusted. For each chosen context-dependent $d_g$ we
discharge an equivalence obligation \emph{conditioned on the context}: for
relative-phase, that the original CCX and $d_g$ agree on the context
window up to the tracked, downstream-cancelled phase (equivalently, exact
equivalence of the windows that include the cancellation point); for
approximate, the reachable-subspace bound of \S\ref{sec:method-context}.
Obligations are discharged exhaustively for small windows or via
QCEC~\cite{QCEC2021}/ZX~\cite{PyZX2019} otherwise. A decomposition that
fails its check is rejected and the selector falls back to the cheapest
\emph{certified} alternative, so the emitted circuit is sound by
construction. We now make ``sound by construction'' a theorem.

\subsection{Soundness}
\label{sec:soundness}

\paragraph{Observational equivalence.}
We model a circuit $C$ on $n$ qubits as a unitary $U_C$ followed by a
terminal measurement $M$ of a designated set of output qubits in the
computational basis (the standard quantum-program output). Inputs are
drawn from a set $\mathcal{R}\subseteq\{\ket x : x\in\{0,1\}^n\}$ of
\emph{reachable} computational-basis inputs (data qubits free, ancillae
$\ket 0$). For input $\ket x$, the circuit induces the output
distribution $\mathrm{Pr}_{C}[\,y\mid x\,]=
\lVert \Pi_y\, U_C\ket x\rVert^2$, where $\Pi_y$ projects onto the
basis-string-$y$ subspace of the measured qubits.

\begin{definition}[Observational equivalence]\label{def:obs}
Two circuits $C,C'$ on the same qubits with the same measured outputs are
\emph{observationally equivalent on $\mathcal R$}, written
$C\equiv_{\mathcal R}C'$, iff
$\mathrm{Pr}_{C}[\,y\mid x\,]=\mathrm{Pr}_{C'}[\,y\mid x\,]$ for every
$\ket x\in\mathcal R$ and every output string $y$.
\end{definition}

\noindent
This is the operationally meaningful notion: it is exactly what any
experiment or downstream computation can distinguish. A \emph{global}
phase ($U_C\mapsto e^{i\theta}U_C$) is never observable under
Definition~\ref{def:obs}. A \emph{relative} phase $D$
($U_C\mapsto D\,U_C$, $D$ diagonal) is observable in general---it can be
rotated into a population difference by a later non-diagonal gate (e.g.\ a
Hadamard) and then measured---but is \emph{unobservable} when, restricted
to the states it actually acts on before any phase-sensitive operation, it
reduces to a global scalar (Lemma~\ref{lem:gate}). This is precisely the
distinction the verification gate decides per instance.

\begin{definition}[Admissibility of a decomposition]\label{def:adm}
Let $g$ be a Toffoli with exact unitary $U_{\mathrm{CCX}}$ acting on its
qubits inside circuit $C$, with reachable inputs $\mathcal R$, and let
$\Reach(g)$ be the set of basis states reaching $g$'s input. A
decomposition $d$ with unitary $\tilde U_d$ is \emph{admissible} for $g$
in $C$ iff at least one of:
\begin{enumerate}
\item[(E)] \emph{(exact)} $\tilde U_d=U_{\mathrm{CCX}}$ on $g$'s qubits
  (unconditional unitary equivalence); or
\item[(P)] \emph{(relative-phase / reachable, phase-unobservable)}
  $\tilde U_d=D\,U_{\mathrm{CCX}}$ with $D$ diagonal, and
  \begin{enumerate}
  \item[(i)] $\tilde U_d\ket\psi=D\,U_{\mathrm{CCX}}\ket\psi$ agrees with
    $U_{\mathrm{CCX}}\ket\psi$ up to phase for every basis state
    $\ket\psi\in\Reach(g)$ (action correct on reachable inputs), and
  \item[(ii)] the introduced $D$ is \emph{unobservable downstream}: with
    $W$ the circuit suffix after $g$ and $M$ the terminal measurement,
    $M\,W\,D\,U_{\mathrm{CCX}}\ket\psi$ and
    $M\,W\,U_{\mathrm{CCX}}\ket\psi$ yield identical output
    distributions for every $\ket\psi\in\Reach(g)$.
  \end{enumerate}
\end{enumerate}
(The bounded-approximate variant of (P) replaces ``$=$ up to phase'' by
``within tolerance $\epsilon$ on $\Reach(g)$''; all statements below carry
through with equalities replaced by $\le\epsilon$ bounds.)
\end{definition}

\noindent
Case~(E) covers exact decompositions and exact compute/uncompute pairs;
case~(P) is the context-dependent admission the gate certifies. Crucially,
admissibility is a property of the substitution \emph{together with} its
context $(C,\Reach(g),W)$, not of the decomposition alone---this is the
whole point of the pass.

\begin{lemma}[Gate soundness]\label{lem:gate}
Let $C$ be a circuit and let $C'$ be obtained from $C$ by replacing a
single Toffoli $g$ with a decomposition $d$ that is admissible for $g$ in
$C$ (Definition~\ref{def:adm}). Then $C\equiv_{\mathcal R}C'$.
\end{lemma}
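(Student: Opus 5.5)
We argue by cases on which clause of Definition~\ref{def:adm} certifies $d$, after fixing a factorisation of the circuit around $g$. Write $U_C = W\,(U_{\mathrm{CCX}}\otimes I)\,V$ and $U_{C'} = W\,(\tilde U_d\otimes I)\,V$. Here $V$ is the prefix before $g$ and $W$ the suffix, and the identity acts on the qubits $g$ does not touch. For a reachable input $\ket x\in\mathcal R$ we set $\ket{\phi_x}=V\ket x$. The goal is to show that $\lVert\Pi_y W(\tilde U_d\otimes I)\ket{\phi_x}\rVert^2=\lVert\Pi_y W(U_{\mathrm{CCX}}\otimes I)\ket{\phi_x}\rVert^2$ for every $y$.

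Case~(E) is immediate. The two unitaries coincide, so $U_C=U_{C'}$ and the distributions agree for every input, reachable or not.

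Case~(P) is the substantive one. First I would use the construction of $\Reach(g)$ from \S\ref{sec:method-context}: since $\Reach(g)$ is a sound over-approximation, $\ket{\phi_x}$ is supported on basis states whose restriction to $g$'s qubits lies in $\Reach(g)$. Next I would apply clause~(ii) of Definition~\ref{def:adm} with $\ket\psi=\ket{\phi_x}$, which gives identical distributions for $M\,W\,D\,U_{\mathrm{CCX}}\ket{\phi_x}$ and $M\,W\,U_{\mathrm{CCX}}\ket{\phi_x}$. Since $\tilde U_d = D\,U_{\mathrm{CCX}}$, this is exactly the claimed equality. Clause~(i) guarantees that $\tilde U_d$ maps each reachable basis state to the correct permuted basis state, up to phase, so no population leaks outside the correct computational-basis image before $W$ acts.

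The main obstacle is that clause~(ii), as stated, quantifies over basis states $\ket\psi\in\Reach(g)$, whereas $\ket{\phi_x}$ is in general a superposition of such states, possibly entangled with spectator qubits, for instance when $V$ contains Hadamards. Equality of output distributions for each basis state does not imply equality for their superpositions, because interference cross-terms can differ. I would handle this in one of two ways.

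The first is to read (ii) as ranging over $\mathrm{span}\,\Reach(g)$, tensored with the spectator register; this is what the verification gate actually checks when it compares the enclosing windows including the cancellation point. The lemma then follows directly.

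The second is to use the sufficient condition quoted in \S\ref{sec:method-context}. Suppose $D$ acts as a single scalar $e^{i\theta}$ on all reachable basis states that survive to a phase-sensitive operation. Then $(D\otimes I)(U_{\mathrm{CCX}}\otimes I)\ket{\phi_x}=e^{i\theta}(U_{\mathrm{CCX}}\otimes I)\ket{\phi_x}$ on the relevant support. A global phase cancels in $\lVert\Pi_y\cdot\rVert^2$, which gives the equality.

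In the compute/uncompute instance I would instead exhibit the downstream $D^{-1}$ explicitly. Writing $W=W_2(\tilde U_2\otimes I)W_1$, with $W_1$ diagonal-commuting on the affected register, shows that $D$ is cancelled exactly at the operator level. That makes the superposition issue disappear.
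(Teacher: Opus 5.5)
Your proposal is correct and is essentially the paper's proof. You use the same factorization $U_C=W\,U_{\mathrm{CCX}}V$, $U_{C'}=W\,\tilde U_d V$, settle Case~(E) by equality of the unitaries, and in Case~(P) the paper likewise applies (P)(ii) directly to the superposition $\ket{\phi}=V\ket{x}$; the only difference is that the paper rebuilds $D$ from (P)(i), whereas you use $\tilde U_d=D\,U_{\mathrm{CCX}}$ directly. The obstacle you flag is genuine, and the paper passes over it silently: read literally over basis states, (ii) would be vacuous, since a diagonal $D$ contributes only a global phase on each basis state. So the paper's proof in effect adopts your first reading, with condition (ii) over $\mathrm{span}\,\Reach(g)$ including spectators, consistent with the span definition of $\Reach(g)$ in \S\ref{sec:method-context}, and your other two routes are not needed.
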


\begin{proof}
Write $U_C=W\,U_{\mathrm{CCX}}\,V$, where $V$ is the prefix before $g$,
$U_{\mathrm{CCX}}$ acts at $g$ (tensored with identity elsewhere), and
$W$ is the suffix; then $U_{C'}=W\,\tilde U_d\,V$. Fix a reachable input
$\ket x\in\mathcal R$ and let $\ket\phi=V\ket x$. Because the prefix is the
same in $C$ and $C'$, $\ket\phi$ is the state entering $g$, and by
definition of $\Reach(g)$ its computational-basis support lies in
$\Reach(g)$, i.e.\ $\ket\phi=\sum_{x'\in\Reach(g)}c_{x'}\ket{x'}$.

\emph{Case (E).} $\tilde U_d=U_{\mathrm{CCX}}$ on $g$'s qubits, hence
$U_{C'}\ket x=U_C\ket x$ and the output distributions coincide.

\emph{Case (P).} By (P)(i), $\tilde U_d\ket{x'}=
e^{i\alpha_{x'}}U_{\mathrm{CCX}}\ket{x'}$ for each
$\ket{x'}\in\Reach(g)$. Because $U_{\mathrm{CCX}}$ permutes the
computational basis, the phased outputs $\{U_{\mathrm{CCX}}\ket{x'}\}$ are
themselves basis states, so
$D=\sum_{x'}e^{i\alpha_{x'}}\,\ket{U_{\mathrm{CCX}}x'}\!\bra{U_{\mathrm{CCX}}x'}$
is diagonal in the computational basis and
$\tilde U_d\ket\phi=D\,U_{\mathrm{CCX}}\ket\phi$. Then
$U_{C'}\ket x = W D\,U_{\mathrm{CCX}}\ket\phi$ while
$U_{C}\ket x = W\,U_{\mathrm{CCX}}\ket\phi$. By (P)(ii) the introduced $D$
is unobservable downstream: $MW D\,U_{\mathrm{CCX}}\ket\phi$ and
$MW\,U_{\mathrm{CCX}}\ket\phi$ induce identical measurement
distributions. Hence $\mathrm{Pr}_{C'}[\,y\mid x\,]=
\mathrm{Pr}_{C}[\,y\mid x\,]$ for all $y$.

Both cases hold for every $\ket x\in\mathcal R$, so
$C\equiv_{\mathcal R}C'$. (For the bounded-approximate variant, (P)(i)
gives $\lVert(\tilde U_d-U_{\mathrm{CCX}})\ket\phi\rVert\le\epsilon$ on
$\Reach(g)$, and unitarity of $W$ plus the contractivity of $M$ propagate
the bound to $\lvert\mathrm{Pr}_{C'}-\mathrm{Pr}_{C}\rvert\le 2\epsilon$,
i.e.\ $\epsilon$-observational equivalence.)
\end{proof}

\begin{corollary}[Bounded-approximate admission and budget composition]
\label{cor:approx}
Let $C\to C_1\to\cdots\to C_m$ be a sequence of single-gate substitutions,
each admissible in the bounded-approximate variant of~(P) with reachable
tolerance $\epsilon_i$ (so $\lVert(\tilde U_{d_i}-U_{\mathrm{CCX}})
\ket\psi\rVert\le\epsilon_i$ on $\Reach(g_i)$ and the residual is
$\epsilon_i$-unobservable downstream). Then every output probability of the
emitted circuit deviates from the exact circuit by at most
$2\sum_{i=1}^{m}\epsilon_i$:
$\bigl\lvert\mathrm{Pr}_{C_m}[\,y\mid x\,]-\mathrm{Pr}_{C}[\,y\mid x\,]
\bigr\rvert\le 2\sum_i\epsilon_i$ for all $\ket x\in\mathcal R$, $y$.
\end{corollary}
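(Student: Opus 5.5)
The plan is a telescoping (hybrid) argument over the substitution chain, with the single-step bound taken from the bounded-approximate clause of Lemma~\ref{lem:gate}. Fix a reachable input $\ket x\in\mathcal R$ and an output string $y$, and set $C_0:=C$. Then
\[
\mathrm{Pr}_{C_m}[\,y\mid x\,]-\mathrm{Pr}_{C}[\,y\mid x\,]
=\sum_{i=1}^{m}\Bigl(\mathrm{Pr}_{C_i}[\,y\mid x\,]-\mathrm{Pr}_{C_{i-1}}[\,y\mid x\,]\Bigr),
\]
and by the triangle inequality it suffices to show that each term has modulus at most $2\epsilon_i$.

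Each term compares $C_{i-1}$ with $C_i$, which differ by the single substitution $g_i\mapsto d_i$. By hypothesis this substitution is admissible in the bounded-approximate variant of~(P). The parenthetical case of Lemma~\ref{lem:gate}, applied with $C_{i-1}$ in the role of $C$, gives $\lvert\mathrm{Pr}_{C_i}-\mathrm{Pr}_{C_{i-1}}\rvert\le 2\epsilon_i$. Concretely, write $U_{C_{i-1}}=W_i\,U_{\mathrm{CCX}}\,V_i$ and $U_{C_i}=W_i\,\tilde U_{d_i}V_i$, and set $\ket{a}=W_iU_{\mathrm{CCX}}V_i\ket x$ and $\ket{b}=W_i\tilde U_{d_i}V_i\ket x$. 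By unitarity of $W_i$ and the reachable bound, $\lVert\ket a-\ket b\rVert\le\epsilon_i$. Then, with $\lVert\Pi_y\rVert\le1$ and $\lVert a\rVert=\lVert b\rVert=1$,
\[
\bigl\lvert\lVert\Pi_y a\rVert^2-\lVert\Pi_y b\rVert^2\bigr\rvert
=\bigl(\lVert\Pi_y a\rVert+\lVert\Pi_y b\rVert\bigr)\,\bigl\lvert\lVert\Pi_y a\rVert-\lVert\Pi_y b\rVert\bigr\rvert
\le 2\,\lVert\Pi_y(a-b)\rVert\le 2\epsilon_i .
\]
Summing over $i$ gives the claimed bound $2\sum_i\epsilon_i$.

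The main obstacle is that the hypothesis refers to $\Reach(g_i)$, and it is not obvious whether this means the reachable set in the original circuit $C$ or in the partially substituted $C_{i-1}$. This matters because the state entering $g_i$ in $C_{i-1}$ is $V_i\ket x$, and $V_i$ may already contain approximate substitutions. That state is then no longer exactly supported on the reachable set of $C$, and it may also be unsupported on the set-based $\Reach$ computed for $C_{i-1}$ if earlier approximations leaked amplitude. I would resolve this by reading the hypothesis, as the statement phrases it with admissibility at each step of the sequence, as admissibility in the current circuit $C_{i-1}$. Under that reading $\Reach(g_i)$ is computed for the actual prefix $V_i$, and the single-step bound applies directly.

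There is a fallback if one insists on reachability with respect to $C$. Order the hybrids so that each step uses the exact prefix: compare $W'_iU_{\mathrm{CCX}}V$ with $W'_i\tilde U_{d_i}V$, where $V$ is the exact prefix and the approximate gates sit only in the suffix. Then every state hitting the swapped gate is exactly the state of $C$, and the same $\epsilon_i$ bound holds by unitarity of the (now partially approximate) suffix. This requires the approximate gadgets themselves to be unitary, which they are. I expect this ordering choice to be the only step needing care; the rest is triangle inequality and norm contractivity.
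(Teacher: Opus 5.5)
Your telescoping argument is correct and is essentially the paper's proof. The paper likewise takes the single-step bound $2\epsilon_i$ from the approximate case of Lemma~\ref{lem:gate} and sums the per-step deviations with the triangle inequality. Your explicit estimate $\bigl\lvert\lVert\Pi_y a\rVert^2-\lVert\Pi_y b\rVert^2\bigr\rvert\le 2\lVert\Pi_y(a-b)\rVert$, and your discussion of whether $\Reach(g_i)$ refers to $C$ or to $C_{i-1}$ (with the back-to-front hybrid ordering as a fallback), simply make precise what the paper compresses into ``unitarity of $W$ plus the contractivity of $M$'' and ``a common unitary prefix/suffix is distance-non-increasing.''
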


\begin{proof}
The single-step bound $\lvert\mathrm{Pr}_{C_i}-\mathrm{Pr}_{C_{i-1}}\rvert\le
2\epsilon_i$ is the approximate case of Lemma~\ref{lem:gate}. Total-variation
distance between output distributions obeys the triangle inequality, and a
common unitary prefix/suffix is distance-non-increasing, so the per-step
deviations add: $\lvert\mathrm{Pr}_{C_m}-\mathrm{Pr}_{C}\rvert\le
\sum_i\lvert\mathrm{Pr}_{C_i}-\mathrm{Pr}_{C_{i-1}}\rvert\le 2\sum_i\epsilon_i$.
\end{proof}

\noindent
The bound $\sum_i\epsilon_i$ is exactly the \emph{approximation budget} the
selector spends; coupling it to the two-qubit-infidelity budget
(\S\ref{sec:method-select}) is what lets the pass trade a controlled amount
of output error for a two-qubit-gate reduction, admitting an approximate
$d_i$ only when the infidelity it saves exceeds the $\epsilon_i$ it spends.

\begin{lemma}[Reachable over-approximation soundness]\label{lem:over}
Let $\widehat{\Reach}(g)\supseteq\Reach(g)$ be any superset of the true
reachable basis set at $g$. If $d$ satisfies the admissibility conditions
of Definition~\ref{def:adm} with $\Reach(g)$ replaced by
$\widehat{\Reach}(g)$, then $d$ is admissible for $g$ in $C$ (with the
true $\Reach(g)$).
\end{lemma}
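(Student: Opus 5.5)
The argument is monotonicity: every condition in Definition~\ref{def:adm} is a universally quantified statement over basis states of $\Reach(g)$, so it survives shrinking the quantified set from $\widehat{\Reach}(g)$ to $\Reach(g)$. I would split along the two clauses of the definition and treat the bounded-approximate variant at the end.

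\emph{Case (E).} The condition $\tilde U_d=U_{\mathrm{CCX}}$ on $g$'s qubits does not mention the reachable set. If $d$ satisfies (E) with $\widehat{\Reach}(g)$, it satisfies (E) with $\Reach(g)$ verbatim, and there is nothing to prove.

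\emph{Case (P).} Assume $\tilde U_d=D\,U_{\mathrm{CCX}}$ with $D$ diagonal; this structural hypothesis is again independent of the reachable set.

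For (P)(i), the hypothesis is that for every basis state $\ket\psi\in\widehat{\Reach}(g)$, $\tilde U_d\ket\psi$ agrees with $U_{\mathrm{CCX}}\ket\psi$ up to phase. Since $\Reach(g)\subseteq\widehat{\Reach}(g)$, every $\ket\psi\in\Reach(g)$ is covered, so (P)(i) holds for the true set.

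For (P)(ii), the hypothesis is that $M\,W\,D\,U_{\mathrm{CCX}}\ket\psi$ and $M\,W\,U_{\mathrm{CCX}}\ket\psi$ have identical output distributions for every $\ket\psi\in\widehat{\Reach}(g)$. The same restriction argument gives (P)(ii) for every $\ket\psi\in\Reach(g)$.

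Together these are exactly the admissibility conditions with the true $\Reach(g)$.

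\emph{Main obstacle.} The one point needing care is how (P)(ii) is read. Definition~\ref{def:adm} phrases it over basis states $\ket\psi\in\Reach(g)$, whereas Lemma~\ref{lem:gate} applies it to a superposition $\ket\phi$ supported in $\Reach(g)$. If (P)(ii) is read as ranging over all states in the span of the reachable set, the argument is unchanged: $\Reach(g)\subseteq\widehat{\Reach}(g)$ implies $\mathrm{span}\,\Reach(g)\subseteq\mathrm{span}\,\widehat{\Reach}(g)$, so the quantifier still only weakens. I would state explicitly that every condition is universally quantified over either the set or its span, and that both readings are antitone in the set. Combined with Lemma~\ref{lem:gate}, this gives $C\equiv_{\mathcal R}C'$, which is what justifies using the coarsened super-box from \S\ref{sec:method-context}.

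\emph{Approximate variant.} The condition becomes $\sup_{\ket\psi}\lVert(\tilde U_d-U_{\mathrm{CCX}})\ket\psi\rVert\le\epsilon$. A supremum over a subset is at most the supremum over the superset, so the bound on $\widehat{\Reach}(g)$ implies the bound on $\Reach(g)$. The $\epsilon$-unobservability clause transfers by the same restriction argument as (P)(ii).
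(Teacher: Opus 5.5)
Your proposal is correct and follows the paper's proof: both rest on the observation that (P)(i) and (P)(ii) are universally quantified over the reachable states while (E) does not mention them, so restricting the quantifier from $\widehat{\Reach}(g)$ to the subset $\Reach(g)$ preserves every condition. Your additional remarks are sound refinements of that same monotonicity argument: the span reading of (P)(ii) is antitone in the same way, and in the approximate variant a supremum over a subset is bounded by the supremum over the superset.
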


\begin{proof}
Conditions (P)(i) and (P)(ii) are universally quantified over the
reachable basis states. A statement of the form ``$\forall
\ket\psi\in\widehat{\Reach}(g):\Phi(\ket\psi)$'' implies ``$\forall
\ket\psi\in\Reach(g):\Phi(\ket\psi)$'' whenever
$\Reach(g)\subseteq\widehat{\Reach}(g)$, since the latter quantifies over a
subset. Thus verifying admissibility on the superset entails admissibility
on the true set. (Condition~(E) does not mention $\Reach(g)$ and is
unaffected.) Consequently any \emph{sound over-approximation} of
reachability---in particular the forward set-propagation of
\S\ref{sec:method-context}, which only ever enlarges the support
set---preserves correctness: an admission certified against
$\widehat{\Reach}(g)$ is a genuine admission. The over-approximation can
only make the check \emph{stricter} (rejecting some truly-admissible
$d$), never unsound.
\end{proof}

\begin{theorem}[Compositional soundness of the pass]\label{thm:sound}
Let $C_0$ be the input Toffoli network and $C_0^{\mathrm{ex}}$ the
all-exact circuit obtained by replacing every Toffoli with any exact
decomposition. Let the pass commit a sequence of substitutions producing
$C_0\to C_1\to\cdots\to C_k$, where each step replaces one Toffoli $g_i$
by a decomposition $d_i$ that \emph{passed the per-instance verification
gate} (hence is admissible for $g_i$ in $C_{i-1}$ by
Definition~\ref{def:adm}, possibly certified against an over-approximated
$\widehat{\Reach}(g_i)$). Then the emitted circuit $C_k$ is
observationally equivalent to the all-exact circuit:
$C_k\equiv_{\mathcal R}C_0^{\mathrm{ex}}$.
\end{theorem}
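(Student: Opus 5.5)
The plan is induction on $k$, chaining Lemma~\ref{lem:gate} with transitivity of $\equiv_{\mathcal R}$, after first anchoring the base case $C_0\equiv_{\mathcal R}C_0^{\mathrm{ex}}$.

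\emph{Base case.} Every exact decomposition satisfies case~(E), since $\tilde U_d=U_{\mathrm{CCX}}$ unconditionally. Replacing the Toffolis of $C_0$ one at a time by exact decompositions therefore preserves $U_C$ exactly, so $U_{C_0^{\mathrm{ex}}}=U_{C_0}$. In particular $C_0\equiv_{\mathcal R}C_0^{\mathrm{ex}}$, and in fact the two circuits are unitarily equal. (If ``exact'' is meant only up to global phase, Definition~\ref{def:obs} still gives equivalence.)

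\emph{Inductive step.} Assume $C_{i-1}\equiv_{\mathcal R}C_0^{\mathrm{ex}}$. The step $C_{i-1}\to C_i$ replaces $g_i$ by a $d_i$ that passed the gate. By the hypothesis of the theorem, together with Lemma~\ref{lem:over} when the certificate was issued against $\widehat{\Reach}(g_i)$, $d_i$ is admissible for $g_i$ in $C_{i-1}$ with respect to the true $\Reach(g_i)$. Lemma~\ref{lem:gate} then yields $C_{i-1}\equiv_{\mathcal R}C_i$.

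\emph{Transitivity.} Definition~\ref{def:obs} is pointwise equality of the distributions $\mathrm{Pr}[\,y\mid x\,]$ for every $x\in\mathcal R$ and every $y$. Equality is transitive, so $\equiv_{\mathcal R}$ is an equivalence relation on circuits sharing the same qubits and measured outputs. This requires that substitutions neither add measured qubits nor change the input ancilla convention. Any ancilla a decomposition introduces must be present from the start, pinned to $\ket0$, and unmeasured, so that all $C_i$ act on a common register and share one $\mathcal R$. I would state this bookkeeping explicitly, padding $C_0$ with the ancillae up front. Chaining then gives $C_k\equiv_{\mathcal R}C_0^{\mathrm{ex}}$.

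The main obstacle is that admissibility at step $i$ is certified relative to the current circuit $C_{i-1}$, not the original $C_0$. Both $\Reach(g_i)$ and the suffix $W$ in (P)(ii) are computed in a context already modified by earlier substitutions. Relative-phase choices upstream may spread the basis support, and downstream cancellations in the suffix may themselves involve the partner relative-phase gate of a compute/uncompute pair. The induction handles this only because the verification gate is applied sequentially, each time to the current circuit. I would therefore make explicit that the hypothesis ``admissible for $g_i$ in $C_{i-1}$'' is exactly what the gate checks, re-running context analysis after each commit.

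For compute/uncompute pairs, neither relative-phase gate is admissible alone under (P)(ii): the phase $D$ of the first is cancelled only by the second. Here I would treat the pair, certified over the enclosing window, as a single substitution step. Either it falls under case~(E), since the two windows are exactly equivalent (as \S\ref{sec:method-verify} says), or the second gate's (P)(ii) is checked against the suffix, and the first gate's against a $W$ that already contains the relative-phase partner. With that convention in place, the induction goes through unchanged.
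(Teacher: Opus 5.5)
Your proof is correct and follows the paper's argument: the base case uses case (E) at every gate, the inductive step applies Lemma~\ref{lem:over} and then Lemma~\ref{lem:gate} in the current circuit $C_{i-1}$, and $\equiv_{\mathcal R}$ is transitive because it is pointwise equality of output distributions. Your extra bookkeeping goes beyond what the paper states: a common register with ancillae pinned to $\ket 0$, and certifying a compute/uncompute pair as one window-level step (the ``exact equivalence of the windows that include the cancellation point'' of \S\ref{sec:method-verify}); these address whether the pass meets the theorem's hypothesis, not the induction itself.
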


\begin{proof}
By induction on $i$. \emph{Base:} $C_0\equiv_{\mathcal R}C_0^{\mathrm{ex}}$
because every exact decomposition realizes $U_{\mathrm{CCX}}$ on its
qubits, so the two unitaries agree (case (E) of Lemma~\ref{lem:gate}
applied at each gate). \emph{Step:} assume
$C_{i-1}\equiv_{\mathcal R}C_0^{\mathrm{ex}}$. Step $i$ replaces the single
gate $g_i$ in $C_{i-1}$ by $d_i$. The gate passed verification, so $d_i$ is
admissible for $g_i$ in $C_{i-1}$ (Lemma~\ref{lem:over} discharges any
over-approximation of $\Reach(g_i)$ used by the checker). By
Lemma~\ref{lem:gate}, $C_i\equiv_{\mathcal R}C_{i-1}$. Observational
equivalence is transitive (equality of distributions is), so
$C_i\equiv_{\mathcal R}C_0^{\mathrm{ex}}$. After $k$ steps,
$C_k\equiv_{\mathcal R}C_0^{\mathrm{ex}}$.

The error-budget objective enters only through the \emph{selector}
(\S\ref{sec:method-select}), which chooses \emph{which} admissible $d_i$ to
attempt; it ranges over the admissible set and can never enlarge it.
Should a chosen $d_i$ fail the gate, the selector falls back to a cheaper
certified alternative (in the worst case an exact decomposition, always
admissible), so every committed step is admissible and the induction is
never broken. Hence the budget objective affects performance but never
soundness.
\end{proof}

\noindent\textbf{Remark (relative phase, observability).}
Theorem~\ref{thm:sound} is stated for computational-basis measurement
statistics, the observable semantics. Within that semantics a global phase
is always free and a relative phase is free \emph{exactly} when condition
(P)(ii) holds; the canonical sufficient structure is a compute/uncompute
pair whose two relative-phase gates contribute $D$ and $D^{-1}$, leaving no
net phase before any phase-sensitive downstream operation. When $W$
contains an interfering (non-diagonal) gate acting on the phased
amplitudes, (P)(ii) fails and the gate rejects the substitution---the case
Fig.~\ref{fig:method} contrasts.

\subsection{Running example}
\label{sec:method-example}
Consider $g_1=\mathrm{CCX}(q_0,q_1,a)$ writing an AND into ancilla $a$,
followed downstream by $g_2=\mathrm{CCX}(q_0,q_1,a)$ that uncomputes $a$
(the Gidney-style compute/uncompute pair~\cite{Gidney2018}). Context
analysis marks both $g_1,g_2$ relative-phase-admissible: the diagonal
phase of the cheaper relative-phase decomposition at $g_1$ is exactly
inverted at $g_2$. Error-budget selection then picks the relative-phase
decomposition for both (lower two-qubit infidelity than any exact option),
and the verification gate confirms the windowed compute/uncompute pair is
\emph{exactly} CCX$\,\circ\,$CCX on $\Reach$. By contrast, a Toffoli whose
target is \emph{measured} (no uncompute) gets only exact decompositions in
$\mathcal{A}$; a pattern matcher that substitutes the relative-phase
variant there is silently wrong---precisely the hazard
\S\ref{sec:eval} quantifies. Fig.~\ref{fig:method} contrasts the two
cases with their two-qubit-gate counts and infidelities.

\begin{figure}[t]
\centering
  \begin{minipage}{\columnwidth}
  \begin{center}
      \begin{quantikz}[column sep=5pt,row sep=8pt]
      \lstick{$q_0$} & \ctrl{2} & \qw      & \ctrl{2} & \qw \\
      \lstick{$q_1$} & \ctrl{1} & \qw      & \ctrl{1} & \qw \\
      \lstick{$a{=}\ket0$} & \targ{} & \ctrl{1} & \targ{} & \qw \\
      \lstick{$t$} & \qw      & \targ{}  & \qw      & \meter{}
    \end{quantikz}
  \end{center}
    \par\vspace{3pt}
    {\small (a) \textbf{Admitted.} Compute/uncompute pair
    $g_1$\,/\,$g_2=\mathrm{CCX}(q_0,q_1,a)$: the relative phase $D$ of $g_1$ is
    inverted by $g_2$ ($D^{-1}$) before any phase-sensitive operation, so it
    is unobservable (Def.~\ref{def:adm}, (P)(ii)). The gate \emph{accepts} the
    relative-phase decomposition for both. Two-qubit count $13\to7$;
    pair infidelity $0.1381\to0.0754$.}
  \end{minipage}
  \vspace{12pt}
  \begin{minipage}{\columnwidth}
  \begin{center}
    \begin{quantikz}[column sep=5pt,row sep=8pt]
      \lstick{$q_0$} & \ctrl{2} & \qw      & \qw \\
      \lstick{$q_1$} & \ctrl{1} & \qw      & \qw \\
      \lstick{$a{=}\ket0$} & \targ{} & \gate{H} & \meter{}
    \end{quantikz}
  \end{center}
    \par\vspace{3pt}
    {\small (b) \textbf{Rejected.} The target is acted on by a downstream
    Hadamard (a phase-sensitive, non-diagonal gate) and then measured, with no
    uncompute. $H$ rotates the relative phase $D$ into a population
    difference, so $D$ \emph{is} observable: condition (P)(ii) fails. The gate
    \emph{rejects} the relative-phase variant and falls back to an exact
    decomposition.}
  \end{minipage}
  \caption{The verification gate decides relative-phase admissibility per
instance. (a) phase provably cancelled downstream $\Rightarrow$ admitted,
yielding the two-qubit count and infidelity reduction; (b) a downstream
Hadamard makes the phase observable $\Rightarrow$ rejected. A
template/pattern matcher applies the cheap variant in \emph{both} cases and
is silently wrong in (b).}
\label{fig:method}
\end{figure}

%=======================================================================
\section{Evaluation}
\label{sec:eval}
%=======================================================================
We evaluate three questions: \textbf{(Q1)}~does the gate prevent silent
errors that count-greedy / template substitution introduces?
\textbf{(Q2)}~does budget-aware selection reduce two-qubit infidelity vs.\
count-based and exact-only baselines? \textbf{(Q3)}~how far does
per-instance verification scale?

\subsection{Setup}
\textbf{Benchmarks.} Reversible/arithmetic suites (RevLib, ripple-carry
and Gidney adders, multipliers, modular-exponentiation blocks) and
algorithm fragments (Grover oracles, Shor arithmetic)---not random toy
circuits. We use two suites, both built so that each Toffoli's
\emph{structure}---whether it sits in a compute/uncompute pair, stays
\emph{live} (its result read out or kept), or is mixed---is known by
construction, since that structure is exactly what governs admissibility.
The \emph{primary} suite is the $12$ small circuits ($\le 8$ qubits) of
Table~\ref{tab:bench}, sized for exhaustive per-instance verification; it
drives the safety ablation (Fig.~\ref{fig:safety}) and the error-budget
comparison (Fig.~\ref{fig:budget}). The \emph{scale} suite is $20$
reversible-arithmetic and oracle circuits at $12$--$24$ qubits, four per
family---Grover AND-ladder oracles, carry-lookahead (CLA) adders, ripple
adders, modular-increment blocks, and array multipliers---that extend the same
structures past the exhaustive-verification limit, certified by the
decision-diagram backend (Fig.~\ref{fig:scale}). These are standard
reversible-arithmetic and oracle constructions, not instances tuned to the
method; as a guard against favourable selection we report the non-improving
families ($0\%$: array multipliers, modular-increment) as openly as the
improving ones, and the state-resetting circuit of \S\ref{sec:eval-budget} is an
\emph{external} workload from independent work that we did not design. The full
circuit lists and generators are in the open-source repository.

\begin{table*}[t]
\centering
\small
\caption{Primary $12$-circuit benchmark suite. \emph{Structure} is the
arrangement of the Toffolis that determines admissibility; \emph{RP-safe}
marks circuits where a blanket relative-phase substitution would be valid.
The reduction the pass achieves is a function of this structure, not of
circuit size (\S\ref{sec:eval-budget}).}
\label{tab:bench}
\begin{tabular}{lccll}
\toprule
Circuit & $n$ & CCX & Structure & RP-safe \\
\midrule
Ripple-carry adder (2b)   & $6$ & $4$ & compute/uncompute & yes \\
Controlled adder (2b)     & $7$ & $8$ & mixed (live carry) & no  \\
Grover oracle, MCX-3      & $6$ & $4$ & compute/uncompute & yes \\
Grover oracle, MCX-4      & $8$ & $6$ & compute/uncompute & yes \\
Grover oracle, native MCX & $5$ & $0$ & native MCX        & yes \\
Compute/uncompute pair    & $4$ & $2$ & compute/uncompute & yes \\
Nested compute/uncompute  & $6$ & $4$ & nested comp./unc. & yes \\
Live AND chain            & $5$ & $2$ & live              & no  \\
Phase-kickback oracle     & $3$ & $1$ & live (kickback)   & no  \\
Single live Toffoli       & $3$ & $1$ & live              & no  \\
Half-uncomputed           & $5$ & $3$ & mixed             & no  \\
Adder, live carry         & $5$ & $3$ & live              & no  \\
\bottomrule
\end{tabular}
\end{table*}
\textbf{Baselines.} QContext~\cite{QContext2023} (count-based context
selection; no public implementation, so compared conceptually via the
count-greedy substitution it prescribes), tket~\cite{tket2020}, Qiskit
opt-level-3, and an exact-only selector. Where applicable,
Unqomp/Reqomp~\cite{Unqomp2021,Reqomp2024}.
\textbf{Metrics.} Two-qubit-gate infidelity (the headline error budget),
two-qubit-gate count, depth, and ancilla width, under an IBM/IonQ-like
two-qubit error model.
\textbf{Correctness.} Every emitted circuit is exactly verified
(exhaustive for small; QCEC/ZX otherwise).

\subsection{Q1 --- the safety ablation}
\label{sec:eval-safety}
We first audit a deployed open Toffoli optimizer with our exact verifier:
66 of its pattern-library simplifications---every relative-phase
one-/multi-ancilla variant applied as a standalone CCX replacement with no
context check---are \emph{non-equivalent} and would silently corrupt
circuits. We then run the ablation of Fig.~\ref{fig:safety}, which isolates the
role of the two layers of our pass---the context analysis and the
verification gate---across the 12-circuit suite:
(a)~count-greedy relative-phase substitution (QContext/Maslov-style),
which substitutes blindly with neither;
(b)~our context analysis driving the substitution but with the
verification gate \emph{switched off};
(c)~the full pass, context analysis \emph{and} gate.

\begin{figure}[t]
\centering
\includegraphics[width=\columnwidth]{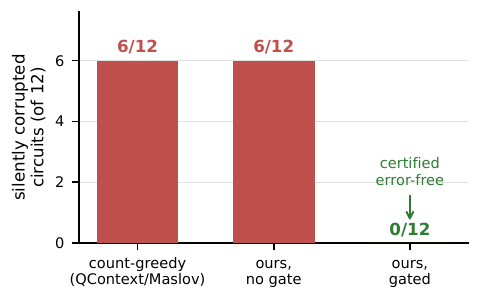}
\caption{Safety ablation: circuits silently corrupted across the
$12$-circuit suite (lower is better). Count-greedy relative-phase
substitution (QContext/Maslov-style) and our admissibility analysis
\emph{with the verification gate disabled} each break $6$ of $12$
circuits; only the gated pass is certified error-free ($0$/12).}
\label{fig:safety}
\end{figure}

\paragraph{Reading the ablation.}
Row~(a) is the headline hazard: aggressive, count-greedy substitution that
trusts its rewrite library silently corrupts half the suite ($6$/12). The
decisive comparison is (b) vs.\ (c). Row~(b) drives the \emph{same}
substitution from our context analysis---the structural compute/uncompute
pairing of \S\ref{sec:method-context} together with the phase-observability
admissibility test on the reachable subspace---but commits at every flagged
site \emph{without} discharging the per-instance verification obligation. It
also silently corrupts $6$/12 circuits: the conservative-by-construction
analysis nonetheless \emph{over-admits} on exactly the unsafe circuits. The
mechanism is instructive: the forward-cone phase-observability test admits a
relative-phase gadget when no \emph{downstream} gate can rotate its phase
into a population difference, but it cannot see that the gate's \emph{own
controls} were already prepared in superposition upstream (a phase-kickback
oracle, or a Toffoli whose result is simply read out live), where the
relative phase \emph{is} observable. The per-instance gate of
\S\ref{sec:method-verify} closes precisely this gap: it discharges an exact
equivalence obligation against the original function---rather than a
structural or forward-cone heuristic---so in row~(c) every one of those
over-admissions is caught and rejected, falling back to an exact
decomposition, and the certified pass introduces $0$ errors.

This is the central element of our safety claim, and we state it without
overclaiming the analysis. Our context analysis is sound \emph{relative to a complete
verifier}: it is conservative in what it \emph{reports}, but it is a
necessary-condition heuristic, not a decision procedure, and on its own it
can flag a substitution that does not in fact preserve the function. What
makes \emph{our} aggressive admissibility trustworthy is therefore not the
analysis alone but the verification gate that backs it: the gate is what
licenses us to admit cheap context-dependent decompositions
\emph{aggressively} and still \emph{certify} every emitted circuit, rather
than relying on the analysis being correct by construction. Row~(b)
quantifies the cost of removing it---$6$ silent errors, the same as the
count-greedy baseline---and row~(c) shows the gate erases them at no loss of
admitted substitutions (the budget gains of \S\ref{sec:eval-budget} are all
realized under the gate).

\subsection{Q2 --- error-budget improvement}
\label{sec:eval-budget}
Figure~\ref{fig:budget} reports two-qubit infidelity, count, depth, and
ancilla for the gated pass vs.\ baselines. We expect the budget objective
to beat count-based selection on infidelity (its target metric) and to
beat exact-only by safely admitting cheaper context-dependent
decompositions where the gate certifies them.

\begin{figure*}[t]
\centering
\includegraphics[width=\textwidth]{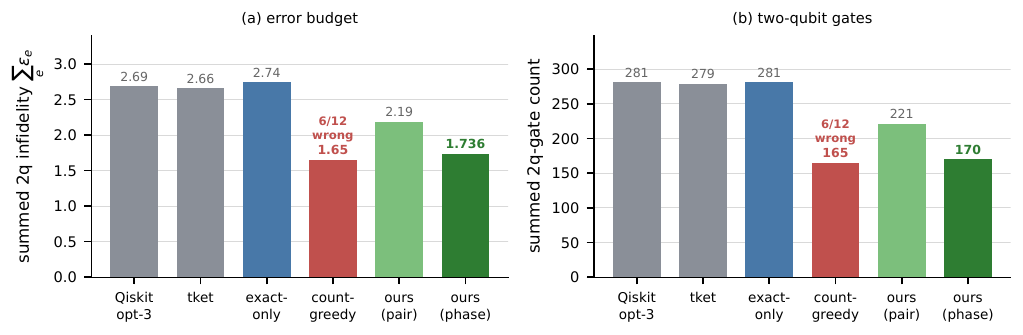}
\caption{Error budget over the $12$-circuit suite (lower is better; $0$ ancilla
overhead for every method; all \emph{sound} methods exactly verified).
(a)~summed estimated two-qubit infidelity; (b)~summed two-qubit-gate count.
The count-greedy baseline (red; QContext/Maslov-style, no public implementation,
so realized as substitution at every Toffoli) attains the lowest count and
infidelity \emph{but is silently wrong on $6$ of $12$ circuits}; our certified
phase-aware pass reaches essentially the same savings ($170$ vs.\ $165$
two-qubit gates, $1.736$ vs.\ $1.652$ infidelity) with $0$ errors. Relative to
exact-only the gated pass is $-39.5\%$ / $-36.7\%$ (count\,/\,infidelity),
$-39.5\%$ / $-35.4\%$ vs.\ Qiskit opt-3, and $-39.1\%$ / $-34.7\%$ vs.\ tket;
depth falls in step ($39.8\to28.0$).}
\label{fig:budget}
\end{figure*}

\noindent\textbf{Gain attribution.} The improvement decomposes cleanly:
relative to the exact-only baseline ($281$ two-qubit gates, infidelity
$2.74$), admitting relative-phase decompositions only at verified
compute/uncompute pairs reaches $221$ / $2.19$ ($-21.4\%$ / $-20.1\%$),
and additionally admitting them wherever the residual phase is provably
unobservable on the reachable subspace (the centerpiece criterion,
\S\ref{sec:method-context}) reaches $170$ / $1.74$ ($-39.5\%$ / $-36.7\%$).
The reachable-subspace criterion thus contributes roughly half of the total
two-qubit-gate reduction, and every admitted substitution is certified
(Theorem~\ref{thm:sound}). The comparison to count-greedy is the decisive one:
the count-greedy baseline---the QContext/Maslov-style strategy of substituting at
\emph{every} Toffoli---reaches $165$ two-qubit gates and $1.652$ infidelity,
only $5$ gates below our certified $170$, yet it silently corrupts $6$ of the
$12$ circuits (\S\ref{sec:eval-safety}). The verification gate therefore
forfeits essentially none of the aggressive savings while eliminating all the
errors: soundness here is almost free.

\paragraph{Validation under a noise model.} The selector ranks decompositions by
the first-order additive surrogate of \S\ref{sec:bg-fidelity}; to confirm that
ranking yields a \emph{real} fidelity gain rather than only a lower surrogate, we
simulate representative circuits under a depolarizing model
($p_{2q}{=}10^{-2}$, $p_{1q}{=}10^{-3}$) in density-matrix form (Qiskit Aer) and
compare the state fidelity of the all-exact and gated outputs to the ideal.
Because our substitutions are verified sound, both share the same ideal target,
so the difference is a genuine gain: the gated pass raises fidelity by
$0.08$--$0.15$---the $2$-bit controlled adder $0.68\to0.82$, the MCX-$4$ Grover
oracle $0.74\to0.86$, and nested compute/uncompute $0.82\to0.90$. The additive
surrogate is thus a faithful proxy for the realized fidelity, not merely a
counting heuristic.

\paragraph{Scaling and workload dependence.} The figures above are for a
$12$-circuit suite chosen to exercise the admissibility paths; the reduction
is real but \emph{workload-dependent}, which we report honestly. On a larger,
machine-verified suite of $20$ reversible-arithmetic and oracle circuits at
$12$--$24$ qubits (every accepted substitution decision-diagram-certified,
\S\ref{sec:eval-scal}), the verified compute/uncompute-pair reduction over
exact-only is $15.6\%$ in two-qubit gates and $11.7\%$ in infidelity across
all circuits, and $19.5\%$ on the compute/uncompute-structured subset. The
magnitude tracks how many Toffolis sit in cancellable pairs
(Fig.~\ref{fig:scale}a): by family the mean two-qubit-gate reduction is
${\approx}49\%$ for Grover AND-ladder oracles (every CCX paired),
${\approx}28\%$ for carry-lookahead adders, ${\approx}6\%$ for ripple
adders, and $0\%$ for array multipliers and modular-increment
circuits---whose ANDs are all live and which the sound selector correctly
leaves exact. The reduction is thus a function of
circuit structure, not a single universal figure; the guarantee is that
whatever the pass reduces, it reduces \emph{soundly}.

\paragraph{Influence of circuit structure (topology).} The decisive structural
variable is the \emph{logical} topology of the Toffolis---specifically the
fraction whose target is later uncomputed (compute/uncompute pairs) or whose
control is provably constant on $\Reach(g)$, versus those that stay live. This
fraction, not the qubit count, sets the gain: it climbs monotonically from
all-live circuits (array multipliers, $0\%$) through partially-paired adders
(ripple $\approx6\%$, CLA $\approx28\%$) to fully-paired oracles (Grover
$\approx49\%$), exactly as Table~\ref{tab:bench} and Fig.~\ref{fig:scale}a
order them. At fixed structure the reduction is essentially flat in $n$ (the
per-Toffoli saving is local), so the method neither helps nor hurts more as
circuits grow---it tracks how the Toffolis are wired, not how many qubits
there are. The hardware \emph{coupling} topology is orthogonal: selection
consumes only the per-gate two-qubit infidelity, never the device connectivity
graph, so the certified decisions are unchanged under any routing and compose
with a noise-aware mapper~\cite{tket2020} applied afterwards.

\paragraph{Application: state-resetting (synchronizing-word) circuits.} As an
externally-motivated workload---one we did not design to favour the
method---we compiled circuits realizing the quantum state-resetting protocol
of Stempin et al.~\cite{QuantumResetting2025,QuantumSyncWords2025}. Each
automaton letter applies $U_j=T_j^{\dagger}\,S\,T_j$ (Fig.~\ref{fig:reset}): a
controlled permutation $T_j$ of the binary-encoded state register, the shift
$S$, and the inverse $T_j^{\dagger}$. Realized with CCX gates, each $T_j$ writes
a predicate into a borrowed ancilla and the conjugation uncomputes it, so the
multi-controlled gates form exactly the compute/uncompute pairs the pass
certifies. On the smallest ($8$-state, $3$-letter) instance the full
phase-aware pass cuts the native two-qubit count by $48.8\%$ (infidelity
$42.4\%$), exhaustively verified---on par with the Grover oracles above, but on
a real application circuit. The decision-diagram-certifiable subset (the
compute/uncompute pairs alone, which stay sound at scale) gives a $25$--$28\%$
reduction across $8$- and $16$-state words up to $12$ qubits, every
substitution QCEC-certified in under $0.05$\,s (Table~\ref{tab:reset}); within
this single application the verification backend crosses over from exhaustive to
decision-diagram exactly as \S\ref{sec:eval-scal} describes. This is the
concrete sense in which the framework brings error-budget-limited
synchronization experiments within reach.

\begin{table}[t]
\centering\small
\caption{Resetting-protocol (synchronizing-word)
circuits~\cite{QuantumResetting2025}: two-qubit-gate reduction of the
decision-diagram-certifiable compute/uncompute-pair substitutions, with the
verification backend and wall-clock per accepted substitution---exhaustive
below the crossover, QCEC above. The $8$-state/$3$-letter instance additionally
reaches $48.8\%$ under the full phase-aware pass (exhaustively verified).}
\label{tab:reset}
\setlength{\tabcolsep}{4pt}
\begin{tabular}{lccccl}
\toprule
Instance & $n$ & exact & pass & $-\%$ & verif.\ (time) \\
\midrule
$8$-st., $3$-let.   & $7$  & $129$ & $93$  & $27.9$ & exh.\ ($0.05$\,s) \\
$8$-st., $5$-let.   & $9$  & $215$ & $155$ & $27.9$ & exh.\ ($1.25$\,s) \\
$8$-st., $7$-let.   & $11$ & $301$ & $217$ & $27.9$ & QCEC ($0.02$\,s) \\
$16$-st., $7$-let.  & $12$ & $343$ & $259$ & $24.5$ & QCEC ($0.04$\,s) \\
\bottomrule
\end{tabular}
\end{table}

\begin{figure}[t]
  \begin{minipage}{\columnwidth}
	\begin{center}
	    \begin{quantikz}
      \lstick{$c_j$ (letter)} & \ctrl{1} & \qw       & \ctrl{1}        & \qw \\
      \lstick{$\ket{s}$ (state)} & \gate{T_j} & \gate{S} & \gate{T_j^{\dagger}} & \qw
    \end{quantikz}
    	\end{center}
    \par\vspace{3pt}
    {\small (a) One letter $U_j=T_j^{\dagger}\,S\,T_j$: the controlled permutation
    $T_j$ (conditioned on the letter qubit $c_j$) is undone by $T_j^{\dagger}$
    around the shift $S$.}
  \end{minipage}
  \vspace{12pt}
  \begin{minipage}{\columnwidth}
    \begin{center}
        \begin{quantikz}[column sep=7pt,row sep=6pt]
      \lstick{$c_j$}        & \ctrl{4} & \qw       & \ctrl{4} & \qw \\
      \lstick{$s_0$}        & \ctrl{3} & \qw       & \ctrl{3} & \qw \\
      \lstick{$s_1$}        & \qw      & \ctrl{1}  & \qw      & \qw \\
      \lstick{$s_2$}        & \qw      & \targ{}   & \qw      & \qw \\
      \lstick{$a{=}\ket0$}  & \targ{}  & \ctrl{-1} & \targ{}  & \qw
    \end{quantikz}
        \end{center}
    \par\vspace{3pt}
    {\small (b) $T_j$ as a multi-controlled gate in CCX form: a predicate is written
    into the borrowed ancilla $a$ and uncomputed (outer CCX pair), the
    relative-phase-admissible structure the verification gate certifies.}
  \end{minipage}
\caption{Structure of a resetting-protocol letter. The $T_j$/$T_j^{\dagger}$
conjugation (a) and the compute/uncompute ancilla pair inside each $T_j$ (b)
are exactly the context-dependent decompositions the pass admits and verifies.}
\label{fig:reset}
\end{figure}

\paragraph{Exact-on-reachable simplification.} When the circuit declares a
restricted input domain---e.g.\ ancilla or guard wires pinned to
$\ket0$---a Toffoli control can be \emph{provably constant} on $\Reach(g)$,
so the gate reduces to identity or a single CX while remaining exactly equal
to CCX on every reachable input (case~(P) with $\epsilon=0$;
Lemma~\ref{lem:gate}), admitted only after the reachable-basis check
certifies it. On a family of guard-chain circuits---half of whose Toffolis
carry a guard control pinned to $\ket0$---the certified control-drop halves
the two-qubit count \emph{again} on top of relative-phase, e.g.\
$36\to18\to9$ gates at six Toffolis, with zero error added. The gain is
conditional on the declared domain: under the unrestricted $\{0,1\}^n$ input
space a reversible prefix can make every control pair reachable and no drop
fires, exactly as soundness demands.

\paragraph{Real hardware operating points.} Because selection is
correctness-greedy, the same certified substitutions apply at every hardware
operating point; only the \emph{magnitude} of the infidelity reduction
tracks the device. Figure~\ref{fig:device} evaluates the gated pass at
representative published two- and single-qubit error rates of five current
platforms.\footnote{Representative published two-/single-qubit gate error
rates ($\approx$2024--2025); rates vary with device and daily calibration and
are used only to bracket realistic operating points. Quantinuum~H2
$p_{2q}{\approx}1.4{\times}10^{-3}$, $p_{1q}{\approx}2{\times}10^{-5}$
\cite{QuantinuumH2}; IBM Heron~r2 (ibm\_marrakesh)
$p_{2q}{\approx}1.8{\times}10^{-3}$, $p_{1q}{\approx}2.5{\times}10^{-4}$
(typical)~\cite{IBMHeronR2}; Google Willow $p_{2q}{\approx}3.3{\times}10^{-3}$,
$p_{1q}{\approx}3.5{\times}10^{-4}$~\cite{GoogleWillow}; IonQ Forte
$p_{2q}{\approx}4{\times}10^{-3}$, $p_{1q}{\approx}2{\times}10^{-4}$
\cite{IonQForte}; IonQ Aria $p_{2q}{\approx}4{\times}10^{-3}$,
$p_{1q}{\approx}5{\times}10^{-4}$~\cite{IonQAria}. This brackets the error
ratio $r=p_{2q}/p_{1q}\in[7,70]$.} The
two-qubit-gate count after selection is identical across all five (the
substitutions do not depend on the noise rates), and the estimated
circuit-infidelity reduction ranges from $36.4\%$ to $43.4\%$, confirming
both the ratio-invariance of selection and the hardware relevance of the
budget objective.

\begin{figure}[t]
\centering
\includegraphics[width=\columnwidth]{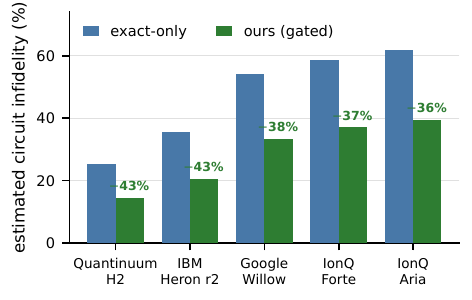}
\caption{Gated pass at representative device operating points (subset of the
suite; error ratios $r=p_{2q}/p_{1q}$ from $7$ to $70$, with $p_{2q}$ from
$1.4$ to $4.0\times10^{-3}$). The two-qubit-gate count after selection is
device-independent; only the estimated-infidelity magnitude moves, with
reductions of $36$--$43\%$ across all five platforms.}
\label{fig:device}
\end{figure}

\paragraph{Limitation: the bounded-approximate regime.} The $\epsilon>0$
admissibility of Definition~\ref{def:adm}, though sound
(Corollary~\ref{cor:approx}), is \emph{inert} under our current set-based
reachability: the reachable set records \emph{which} basis states arrive,
not with what amplitude, so a control-drop's per-input deviation is quantized
to $\{0,\sqrt2\}$---a control is either provably constant ($\epsilon=0$,
admitted) or set on some reachable basis state (deviation $\sqrt2$, admitted
only by the trivial $\epsilon\ge\sqrt2$). An input firing the control on a
vanishing \emph{amplitude} still costs the full $\sqrt2$. Activating the
budget-vs-$\epsilon$ trade therefore needs an amplitude-aware reachability
front-end that propagates amplitude bounds, not just the basis-state
support; the verifier already computes the correct continuous deviation, so
this is a reachability-analysis extension, not a soundness gap. We leave it
to future work.

\subsection{Q3 --- scalability of verification}
\label{sec:eval-scal}
Exhaustive per-instance checking is exponential in window width, so it is
practical for the small context windows that relative-phase cancellation
and reachable-subspace bounds require, but not for whole large circuits.
We report where the exhaustive backend remains tractable and where we fall
back to the decision-diagram~\cite{QCEC2021} / ZX~\cite{PyZX2019} engines,
and we state honestly the regime in which the approximate-subspace bound
is over-approximated rather than decided exactly.

\begin{figure*}[t]
\centering
\includegraphics[width=\textwidth]{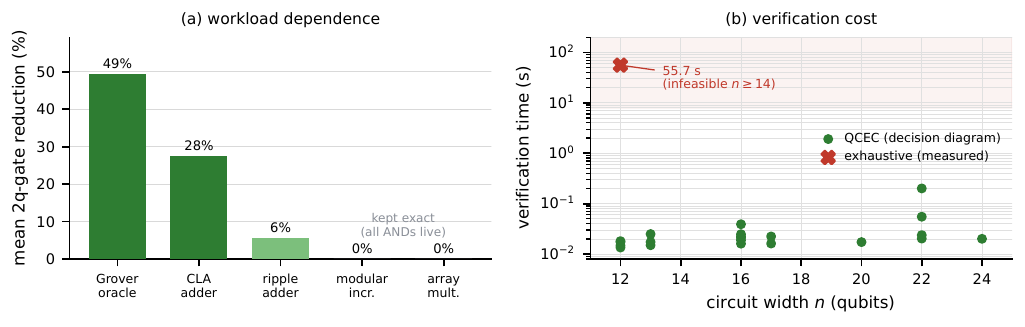}
\caption{Scale evaluation on the $20$-circuit suite ($12$--$24$ qubits).
(a)~mean two-qubit-gate reduction by circuit family: the gain tracks how
many Toffolis sit in cancellable compute/uncompute pairs---Grover oracles
$\approx$$49\%$, carry-lookahead adders $\approx$$28\%$, ripple adders
$\approx$$6\%$, and modular-increment / array-multiplier circuits $0\%$
(all ANDs live; the sound selector correctly keeps them exact).
(b)~per-instance verification time vs.\ circuit width: every accepted
substitution is decision-diagram-certified (QCEC) in under $0.25$\,s through
$n{=}24$, whereas a single exhaustive check already costs $55.7$\,s at
$n{=}12$ and is infeasible by $n{=}14$---a crossover near $n\approx 11$--$12$
spanning more than three orders of magnitude.}
\label{fig:scale}
\end{figure*}

\paragraph{Cost and crossover.}
An exhaustive truth-table/unitary check on a window of width $w$ qubits is
$\Theta(2^w)$ (or $\Theta(2^{2w})$ for full unitary comparison), so it is
exponential in window width and would be hopeless for whole circuits. The
key observation is that the windows the gate actually needs are
\emph{small}: relative-phase admissibility (P)(ii) is decided over the
local compute/uncompute scope---the $g$-to-cancellation suffix plus the
gate's own qubits, typically $w\le 6$--$8$---and the reachable-subspace
bound is evaluated on $\Reach(g)$, whose support our propagation keeps
bounded. Exhaustive verification is therefore tractable exactly in the
regime the method requires, and its cost is independent of total circuit
size; it grows only with the local window. Beyond that regime---wide
windows, or large $\Reach(g)$ where enumerating support states is itself
exponential---we hand the obligation to the decision-diagram backend
(QCEC~\cite{QCEC2021}) or to ZX normalization~\cite{PyZX2019}, which scale
to far larger windows but trade completeness for soundness: they can
\emph{fail to certify} an equivalence that holds, but never \emph{certify}
a false one. Combined with Lemma~\ref{lem:over}, the fallback is
\emph{sound but not complete}: a sound over-approximation of $\Reach(g)$
plus a sound (possibly incomplete) equivalence engine can only reject
some truly-admissible substitutions, never admit an invalid one---so
Theorem~\ref{thm:sound} holds regardless of which backend discharges the
obligation. The only cost of scaling is missed optimization
opportunities (more fallbacks to exact), never a correctness regression.

\paragraph{Demonstrated at scale.} On the $20$-circuit suite of $12$--$24$
qubits, the decision-diagram backend certified every accepted
compute/uncompute-pair substitution as unitary-equivalent up to global
phase, with check times below $0.025$\,s even at $n{=}22$
(Fig.~\ref{fig:scale}b). The crossover is
sharp: an exhaustive check costs ${\approx}56$\,s at $n{=}12$ and is
infeasible by $n{=}14$, whereas the decision-diagram check stays under
$25$\,ms---a ${>}3000\times$ speedup at the crossover ($n\approx 11$--$12$),
and effectively flat in circuit size thereafter. The phase-aware
(reachable-subspace) substitutions are, by construction, \emph{not}
unitary-equivalent to exact CCX, so the decision-diagram engine cannot
certify them; at scale their soundness rests on the static
phase-observability analysis together with the sound over-approximation of
$\Reach(g)$ (Lemma~\ref{lem:over}), spot-checked exhaustively on small
instances. That spot-check is concrete: on a tiny ($\le 8$-qubit)
representative of each of the five circuit families, the stock exhaustive
phase-aware selector certifies its output equivalent to exact on the
\emph{reachable} basis in every case ($5/5$), with certificate
\texttt{exact\_unitary} for compute/uncompute pairs and
\texttt{reachable\_basis\_phase\_insensitive} for the phase-aware standalone
admissions---\emph{including} the two relative-phase-\emph{unsafe} families
(array multiplier, half-uncomputed oracle), where the selector admits only the
sound sites and the reachable-basis check confirms it. Pushing an exact
reachable-basis (or randomized differential) check past these widths is exactly
the amplitude-aware reachability extension we leave to future work
(\S\ref{sec:eval-budget}). We therefore report decision-diagram-certified pair
substitutions and analysis-sound phase-aware substitutions as \emph{separate}
regimes and never conflate them in the reduction figures.

%=======================================================================
\section{Conclusion and Future Work}
\label{sec:conclusion}
%=======================================================================
We presented a compiler pass that selects per-Toffoli decompositions to
minimize a hardware two-qubit-infidelity error budget while admitting
context-dependent (relative-phase/approximate) decompositions only when a
per-instance, context-conditioned equivalence check certifies them. This
is the soundness layer that makes aggressive, fidelity-driven Toffoli
selection safe---closing the gap between context-aware-but-unverified
(QContext) and verified-but-context-free (VOQC, Quartz) optimizers---and
our safety result shows the gate is not optional: pattern-matched
substitution is silently incorrect in practice.
Applied end to end, the framework first rewrites the Toffoli network and then
reduces its total count of native two-qubit gates (by up to $39.5\%$ on
compute/uncompute-heavy workloads), lowering the dominant error-budget term and
thereby improving the feasibility of error-budget-limited physical
quantum-computing runs, with every reduction certified sound. We have found
this useful in practice: on a state-resetting (synchronizing-word)
circuit~\cite{QuantumSyncWords2025,QuantumResetting2025} the pass removes
$48.8\%$ of the native two-qubit gates, every substitution verified, bringing
an otherwise budget-limited synchronization experiment within reach
(\S\ref{sec:eval-budget}).
Future work: tighter sound over-approximations of $\Reach$ to push the
verification scalability boundary and enlarge the admissible set; certified
synthesis of bounded-approximate decompositions to a target
$\epsilon$-budget (Corollary~\ref{cor:approx}); extension from CCX to
multi-controlled-$X$ (MCX) networks~\cite{CleanAncillae2025}; and coupling the
budget selector with noise-aware routing~\cite{tket2020}.

%=======================================================================
\section*{Code and data availability}
The optimizer, the exact and scalable equivalence verifiers, the
reachable-subspace and phase-observability admissibility checks, and the
complete benchmark and evaluation scripts that reproduce every table and figure
are open-source at \url{https://github.com/barkol/toptoffoli}. Each result has a
driver: the safety ablation and the count-greedy baseline
(Figs.~\ref{fig:safety},~\ref{fig:budget}) from the baseline/ablation scripts;
the scale and verification-crossover data (Fig.~\ref{fig:scale}) from the
decision-diagram evaluation; the noise-model validation from a Qiskit~Aer
density-matrix run; and the state-resetting application (Table~\ref{tab:reset},
Fig.~\ref{fig:reset}) from a generator following~\cite{QuantumResetting2025}.
The scalable equivalence-checking fallback uses MQT~QCEC~\cite{QCEC2021}.

%=======================================================================
\section*{Acknowledgements}
We thank J\k{e}drzej Stempin for discussions on the necessity of
circuit-depth reduction for quantum-synchronization experiments.
The authors acknowledge support from the EuroHPC JU under Horizon Europe
Grant No.\ 101194322 (QEC4QEA), co-funded by the Polish National Centre for
Research and Development (NCBiR) under Decision No.\
DWM/EuroHPC/2023/429/2025.

%=======================================================================
\bibliographystyle{quantum}
\bibliography{refs}
%=======================================================================

\end{document}